\documentclass[conference]{IEEEtran}
\IEEEoverridecommandlockouts

\usepackage{svg}
\usepackage{soul}
\usepackage{helvet}
\usepackage{cite}
\usepackage{hyperref}
\usepackage{amsmath,amssymb,amsfonts}
\usepackage{amsthm}
\usepackage{graphicx}
\usepackage{textcomp}
\usepackage{xcolor}
\usepackage{thmtools}
\usepackage{cleveref}
\usepackage{caption}
\usepackage{adjustbox}
\usepackage{subcaption}
\usepackage{color}
\usepackage[left=0.630in, right=0.640in, bottom=1.1in, top=1in, heightrounded]{geometry}
\usepackage{url,amssymb,threeparttable,multirow,booktabs, tabularx}
\def\BibTeX{{\rm B\kern-.05em{\sc i\kern-.025em b}\kern -.08em
    T\kern-.1667em\lower.7ex\hbox{E}\kern-.125emX}}
\usepackage{algorithm}
\usepackage{algpseudocode}
\usepackage{pifont}

\theoremstyle{plain}
\newtheorem{proposition}{Proposition}

\begin{document}
\title{Transformer-Based Sparse CSI Estimation for Non-Stationary Channels}
\author{
    \IEEEauthorblockN{
        Muhammad Ahmed Mohsin\IEEEauthorrefmark{1}, Muhammad Umer\IEEEauthorrefmark{1}, Ahsan Bilal\IEEEauthorrefmark{2}, Hassan Rizwan\IEEEauthorrefmark{3}, Sagnik Bhattacharya\IEEEauthorrefmark{1},\\Muhammad Ali Jamshed\IEEEauthorrefmark{4}, John M. Cioffi\IEEEauthorrefmark{1}
    }

    \IEEEauthorblockA{\IEEEauthorrefmark{1}Dept. of Electrical Engineering, Stanford University, Stanford, CA, USA}

    \IEEEauthorblockA{\IEEEauthorrefmark{2}Dept. of Computer Science, University of Oklahoma, OK, USA}

    \IEEEauthorblockA{\IEEEauthorrefmark{3}Dept. of Electrical \& Computer Engineering, University of California, Riverside, CA, USA}

    \IEEEauthorblockA{\IEEEauthorrefmark{4}Dept. of Electrical Engineering, University of Glasgow, Scotland, UK}

    \IEEEauthorblockA{Email: \{muahmed, mumer, sagnikb, cioffi\}@stanford.edu, ahsan.bilal-1@ou.edu, \\ muhammadali.jamshed@glasgow.ac.uk
    }
}

\maketitle

\begin{abstract}
Accurate and efficient estimation of Channel State Information (CSI) is critical for next-generation wireless systems operating under non-stationary conditions, where user mobility, Doppler spread, and multipath dynamics rapidly alter channel statistics. Conventional pilot-aided estimators incur substantial overhead, while deep learning approaches degrade under dynamic pilot patterns and time-varying fading. This paper presents a pilot-aided Flash-Attention Transformer framework that unifies model-driven pilot acquisition with data-driven CSI reconstruction through patch-wise self-attention and a physics-aware composite loss function enforcing phase alignment, correlation consistency, and time–frequency smoothness. Under a standardized 3GPP NR configuration, the proposed framework outperforms LMMSE and LSTM baselines by approximately 13\,dB in phase-invariant normalized mean-square error (NMSE) with markedly lower bit-error rate (BER), while reducing pilot overhead by $16\times$. These results demonstrate that attention-based architectures enable reliable CSI recovery and enhanced spectral efficiency without compromising link quality, addressing a fundamental bottleneck in adaptive, low-overhead channel estimation for non-stationary 5G and beyond-5G networks.
\end{abstract}

\begin{IEEEkeywords}
Channel state information, MIMO-OFDM, non-stationary channels, pilot-aided estimation, Transformer networks, deep learning, attention mechanisms, composite loss.
\end{IEEEkeywords}

\section{Introduction}
Accurate acquisition of Channel State Information (CSI) is fundamental to realizing the full potential of Multiple-Input Multiple-Output Orthogonal Frequency Division Multiplexing (MIMO-OFDM) systems, enabling adaptive beamforming, link adaptation, and power allocation. In realistic wireless environments, however, channels exhibit non-stationary behavior due to user mobility, Doppler spread, and time-varying multipath propagation. These effects cause rapid variations in channel statistics, violating the quasi-stationary assumptions underlying conventional estimators. Classical pilot-aided techniques for CSI estimation struggle to maintain accuracy under such dynamics unless densely spaced pilot symbols are inserted, which results in substantial signaling overhead and degraded spectral efficiency. This trade-off between pilot density and estimation reliability is exacerbated in 5G and beyond-5G (B5G) systems operating over wide bandwidths and under high mobility. Consequently, developing channel estimators that are both robust and adaptive under non-stationary conditions remains a critical challenge.

Prior research has sought to reduce training overhead through pilot-aided estimation.~\cite{10107371} combined reliable data carriers with adaptive comb-type pilots to jointly optimize pilot spacing and reliability for improved mean-square error (MSE). However, pilot-aided schemes incur high overhead and adapt poorly when channel statistics drift. Recent works therefore employ deep learning (DL) for channel estimation, leveraging its ability to capture nonlinear propagation and generalize across diverse fading conditions. Lightweight fully connected networks have been used for real-time CSI recovery on edge devices~\cite{9430899}, while~\cite{9452036,10075639} unfolded iterative estimators into neural architectures combining data- and model-driven priors.~\cite{ngorima2025datapilotaidedtemporalconvolutional} proposed a temporal convolutional network (CNN) that jointly exploits pilot and remapped data subcarriers for vehicular fading, and other hybrid designs integrate pilot optimization through correlation-based embedding and neural pruning~\cite{sun2024pilotaidedjointtimesynchronization,9410430}. Beyond temporal adaptation, spatially non-stationary estimation for XL-MIMO and RIS employs polar-domain OMP~\cite{10631699}, HMM-aided visibility modeling~\cite{ceulemans2025nearfieldspatialnonstationarychannel}, and hierarchical sparsity learning~\cite{10286338,xu2025nearfieldpropagationspatialnonstationarity}. Probabilistic and generative frameworks such as Digital Twins~\cite{mohsin20256gtwinhybridgaussian} and diffusion-based estimators~\cite{mohsin2025conditionalpriorbasednonstationarychannel} further enable adaptive CSI reconstruction under non-stationary and multi-modal fading. Despite these advances, existing methods remain limited under dynamic channels: model-driven and unfolded networks degrade as statistics drift, CNNs cannot capture long-range time–frequency dependencies, and Transformer-based models are mostly confined to quasi-static settings without pilot adaptation. A unified, adaptive framework for non-stationary CSI recovery therefore remains largely unexplored.

This paper introduces a pilot-aided Flash-Attention Transformer framework for CSI reconstruction. The framework (i) unifies model-driven pilot acquisition with data-driven recovery via patch-wise self-attention and a physics-aware composite loss (phase-invariant fidelity, correlation alignment, and time–frequency smoothness), and (ii) explicitly conditions on the pilot mask to adapt to non-stationary Doppler and multipath evolution while capturing long-range time–frequency dependencies beyond convolutional or unfolded designs. Simulation results under a standardized 3GPP NR MIMO-OFDM setup demonstrate that the proposed model consistently surpasses LMMSE, LSTM, and LDAMP~\cite{8353153} baselines for CSI estimation across varying SNR levels, approaching the oracle dense-pilot benchmark. The framework achieves $\sim$13 dB gain in phase-invariant normalized mean-squared error (NMSE), with lower bit-error rate (BER) and subcarrier-wise error. Combined with a 16$\times$ pilot reduction, the proposed algorithm yields reliable CSI recovery and higher spectral efficiency without compromising link quality or robustness under non-stationary channels.

\section{System Model} \label{sec:system_model}

\subsection{Channel Model and Pilot Configurations}
This work considers a 3GPP NR OFDM carrier characterized by $N_{\mathrm{RB}}$ resource blocks and subcarrier spacing $\Delta f$. Each resource block contains $12$ subcarriers, yielding $K = 12\,N_{\mathrm{RB}}$ active subcarriers per OFDM symbol. Each slot comprises $L$ OFDM symbols. The transceiver employs an $(n_{\mathrm{Tx}}, n_{\mathrm{Rx}})$ MIMO array supporting up to $n_{\ell} = \min(n_{\mathrm{Tx}}, n_{\mathrm{Rx}})$ spatial layers. The small-scale channel follows a tapped-delay-line profile (TDL-C) with RMS delay spread $\tau_{\mathrm{rms}}$ and maximum Doppler frequency $f_{\mathrm{D}}^{\max}$, modeling moderate mobility in an urban microcell street-canyon scenario. For transmit antenna $t$ and receive antenna $r$, the channel impulse response is
\begin{align}
h_{r,t}(\tau) &= \sum_{p=1}^{P} \alpha_{p,r,t}(t)\,\delta\!\big(\tau-\tau_{p}\big), \\[4pt]
\mathbb{E}\!\left[\alpha_{p,r,t}(t)\,\alpha_{p,r,t}^{\!*}(t+\Delta t)\right] 
&= \sigma_{p,r,t}^{2}\,J_{0}\!\big(2\pi f_{\mathrm{D},p}\Delta t\big),
\end{align}
with fixed excess delays $\{\tau_p\}$ and Bessel-type Doppler correlation (Clarke/Jakes model) for $0\!\le\!f_{\mathrm{D},p}\!\le\!f_{\mathrm{D}}^{\max}$. Large-scale effects include distance-dependent path loss and log-normal shadowing; the effective SNR per realization is adjusted accordingly. The nominal SNR prior to large-scale adjustment is $15\,\mathrm{dB}$.

Two sounding reference signal (SRS) configurations are employed:
\begin{itemize}
\item \textbf{Dense SRS (``perfect CSI''):} $\texttt{NumSRSSymbols}=4$ contiguous symbols per slot starting at $\ell_0=10$, frequency comb factor $K_{\mathrm{TC}}=2$ (comb-2 across subcarriers), periodic every slot, no frequency hopping.
\item \textbf{Sparse SRS (``noisy CSI''):} $\texttt{NumSRSSymbols}=1$ at $\ell_0=10$, comb factor $K_{\mathrm{TC}}=4$ (comb-4), periodic every $2$ slots.
\end{itemize}

Let $\Omega_{\mathrm{dense}}$ and $\Omega_{\mathrm{sparse}}$ denote the dense and sparse resource element (RE) index sets in $\{0,\dots,K-1\}\times\{0,\dots,L-1\}$, respectively, and let $\mathbf{M}\in\{0,1\}^{K\times L}$ be the binary pilot mask for the sparse case. Define $\mathbf{S}_{\mathrm{d}}$ as the unit-power pilot symbols on $\Omega_{\mathrm{dense}}$. The received pilots are
\begin{equation}
\mathbf{Y}_{\mathrm{d}} = \mathbf{P}_{\Omega_{\mathrm{dense}}}\big(\mathbf{H}\odot\mathbf{S}_{\mathrm{d}}\big) + \mathbf{W}_{\mathrm{d}},
\end{equation}
where $\mathbf{H}\!\in\!\mathbb{C}^{K\times L\times n_{\mathrm{Rx}}\times n_{\mathrm{Tx}}}$ represents the true channel tensor, $\odot$ denotes the Hadamard product, and $\mathbf{P}_{\Omega_{\mathrm{dense}}}$ is the projection operator onto $\Omega_{\mathrm{dense}}$. With oracle timing (perfect path-filter alignment), the simulator returns a dense, effectively noise-free estimate over the entire grid; this tensor is treated as the \emph{target} (``perfect CSI'').

For $\Omega_{\mathrm{sparse}}$ with pilots $\mathbf{S}_{\mathrm{s}}$, the channel estimate at pilot locations is
\begin{equation*}
\hat{\mathbf{H}}_{\Omega_{\mathrm{sparse}}} = \left.\mathbf{Y}_{\mathrm{s}}\oslash \mathbf{S}_{\mathrm{s}}\right|_{\Omega_{\mathrm{sparse}}},\;
\mathbf{Y}_{\mathrm{s}} = \mathbf{P}_{\Omega_{\mathrm{sparse}}}\big(\mathbf{H}\odot\mathbf{S}_{\mathrm{s}}\big) + \mathbf{W}_{\mathrm{s}},
\end{equation*}
followed by (i) symbol-wise temporal filling (pre-/post-hold around the single SRS symbol) and (ii) per-symbol frequency interpolation across subcarrier index $k$ using a linear operator $\mathcal{I}_\ell$ acting on the pilot subcarrier set $\{k_i\}$:
\begin{equation*}
\hat{H}[k,\ell] =
\begin{cases}
\mathcal{I}_\ell\!\left(\{(k_i, \hat{H}[k_i,\ell])\}\right), \quad \text{if } \exists\, (k_i,\ell)\in\Omega_{\mathrm{sparse}},\\[1mm]
\hat{H}[k,\ell^{-}], \quad \text{otherwise (temporal hold),}
\end{cases}
\end{equation*}
where $\ell^{-}$ denotes the most recent symbol containing pilot measurements. This yields $\hat{\mathbf{H}}\in\mathbb{C}^{K\times L\times n_{\mathrm{Rx}}\times n_{\mathrm{Tx}}}$ used as the \emph{input} (``noisy and interpolated'').

Stacking the RX and TX dimensions and defining the masking operator $\mathcal{M}(\mathbf{H})\triangleq \mathbf{M}\odot\mathbf{H}$, the sparse observation per slot satisfies
\begin{equation}
\mathbf{Y} = \mathcal{M}(\mathbf{H}) + \mathbf{W},\quad \mathbf{W}\sim\mathcal{CN}(\mathbf{0},\sigma^2\mathbf{I}),
\end{equation}
with a deterministic interpolator $\mathcal{G}$ such that $\hat{\mathbf{H}}=\mathcal{G}\big(\mathcal{M}(\mathbf{H})+\mathbf{W}\big)$. The dataset thus consists of paired tensors $(\hat{\mathbf{H}},\mathbf{H})$ over all slots and antenna pairs.

Let $|\Omega_{\mathrm{dense}}|$ and $|\Omega_{\mathrm{sparse}}|$ denote the number of pilot REs per slot. With comb-2 and $4$ pilot symbols,
\begin{equation}
\eta_{\mathrm{dense}} = \frac{|\Omega_{\mathrm{dense}}|}{K\,L}
\approx \frac{\tfrac{K}{2}\cdot 4}{K\,L}
= \frac{2}{L}
= \frac{2}{14},
\end{equation}
while for comb-4 and $1$ pilot symbol every $2$ slots, the average per-slot overhead is
\begin{equation}
\eta_{\mathrm{sparse}} = \frac{\tfrac{K}{4}\cdot 1}{2\,K\,L} = \frac{1}{8L}
= \frac{1}{112},
\end{equation}
illustrating a substantial reduction in pilot overhead at the cost of interpolation error that the Transformer model seeks to recover.

\subsection{Pilot Overhead and Achievable Rate} \label{sec:pilot_overhead_rate}
To quantify the spectral efficiency benefit of pilot reduction under reliable CSI reconstruction, the following proposition establishes the achievable rate gain.

\begin{proposition}[Rate Gain from Pilot Reduction under Reliable Reconstruction]
Consider a block-fading channel with coherence block length $T_c$ REs, pilot fraction $\alpha=\tau_p/T_c\in(0,1)$, per-RE SNR $\rho$, Rayleigh fading, and LMMSE channel estimation from $\tau_p$ orthogonal pilots. The standard training-based achievable rate per RE is
\begin{align}
R(\alpha)
&= (1-\alpha)\; \mathbb{E}\!\left[\log_2\!\big(1+\rho_{\mathrm{eff}}(\alpha)\,|h|^2\big)\right], \label{eq:Ralpha}\\
\rho_{\mathrm{eff}}(\alpha)
&\triangleq \frac{\rho\big(1-\sigma_e^2(\alpha)\big)}{1+\rho\,\sigma_e^2(\alpha)},\qquad
\sigma_e^2(\alpha)=\frac{1}{1+\rho\,\alpha T_c}.
\label{eq:rho_eff}
\end{align}
If a reconstruction mechanism enables a reduced pilot fraction $\alpha_1<\alpha_0$ while maintaining or improving the effective SNR, i.e., $\rho_{\mathrm{eff}}(\alpha_1)\ge \rho_{\mathrm{eff}}(\alpha_0)$, then
\begin{align}
R(\alpha_1)-R(\alpha_0)
&\ge (\alpha_0-\alpha_1)\,\mathbb{E}\!\left[\log_2\!\big(1+\rho_{\mathrm{eff}}(\alpha_1)\,|h|^2\big)\right] \nonumber\\
&\ge (\alpha_0-\alpha_1)\,\log_2\!\big(1+\rho_{\mathrm{eff}}(\alpha_1)\big),
\label{eq:gain_bound}
\end{align}
and the gain is strictly positive whenever $\alpha_1<\alpha_0$ and $\rho_{\mathrm{eff}}(\alpha_1)>0$.
\end{proposition}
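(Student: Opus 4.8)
The plan is to write the training-based rate as a product $R(\alpha)=(1-\alpha)\,C\!\big(\rho_{\mathrm{eff}}(\alpha)\big)$, where $C(s)\triangleq\mathbb{E}\!\left[\log_2(1+s\,|h|^2)\right]$ is the ergodic spectral efficiency at per-RE SNR $s$, and then to decouple the effect of shrinking the pilot overhead from the effect of the (by hypothesis non-degrading) effective SNR. Two preliminary facts do the setup work. First, $C$ is nondecreasing, and strictly increasing on $(0,\infty)$, since $\log_2(1+s|h|^2)$ is pointwise nondecreasing in $s$ and $|h|^2>0$ almost surely under Rayleigh fading. Second, because $\sigma_e^2(\alpha)=1/(1+\rho\alpha T_c)\in(0,1)$ for every $\alpha\in(0,1)$, the numerator $\rho\big(1-\sigma_e^2(\alpha)\big)$ of $\rho_{\mathrm{eff}}(\alpha)$ is positive, so $\rho_{\mathrm{eff}}(\alpha)>0$ throughout and the side condition $\rho_{\mathrm{eff}}(\alpha_1)>0$ in the strict-positivity claim is automatic.

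The core step is an add-and-subtract decomposition of the rate difference. Writing $C_i\triangleq C\!\big(\rho_{\mathrm{eff}}(\alpha_i)\big)$ for $i\in\{0,1\}$,
\[
R(\alpha_1)-R(\alpha_0)=(1-\alpha_1)C_1-(1-\alpha_0)C_0=(\alpha_0-\alpha_1)\,C_1+(1-\alpha_0)\,(C_1-C_0).
\]
Since $\alpha_0-\alpha_1>0$ and $1-\alpha_0>0$, and since the hypothesis $\rho_{\mathrm{eff}}(\alpha_1)\ge\rho_{\mathrm{eff}}(\alpha_0)$ together with the monotonicity of $C$ gives $C_1-C_0\ge 0$, discarding the nonnegative second term yields the first inequality of \eqref{eq:gain_bound}. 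The second inequality amounts to replacing the ergodic term $C_1=\mathbb{E}[\log_2(1+\rho_{\mathrm{eff}}(\alpha_1)|h|^2)]$ by the closed-form surrogate $\log_2(1+\rho_{\mathrm{eff}}(\alpha_1))$ --- the value of the integrand at the reference unit channel gain --- via a Jensen-type bound. Strict positivity is then immediate: when $\alpha_1<\alpha_0$ the coefficient $\alpha_0-\alpha_1$ is strictly positive, and $\rho_{\mathrm{eff}}(\alpha_1)>0$ forces $C_1>0$ because the integrand $\log_2(1+\rho_{\mathrm{eff}}(\alpha_1)|h|^2)$ is strictly positive on the probability-one event $\{|h|^2>0\}$; hence the right-hand side of \eqref{eq:gain_bound}, and therefore $R(\alpha_1)-R(\alpha_0)$, is strictly positive.

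The step I expect to require the most care is the last passage, from the ergodic expectation to the deterministic bound $\log_2(1+\rho_{\mathrm{eff}}(\alpha_1))$. Since $x\mapsto\log_2(1+\rho_{\mathrm{eff}}x)$ is concave, a direct application of Jensen's inequality bounds the expectation from \emph{above}, not below; to land on the stated lower bound one must instead argue in the log-gain domain, where $u\mapsto\log_2(1+\rho_{\mathrm{eff}}e^{u})$ is convex, or adopt $|h|^2=1$ as a conservative reference realization. Everything else --- the rate decomposition and the monotonicity of $C$ --- is routine and needs no delicate estimation.
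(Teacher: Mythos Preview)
Your add-and-subtract decomposition $R(\alpha_1)-R(\alpha_0)=(\alpha_0-\alpha_1)C_1+(1-\alpha_0)(C_1-C_0)$ and the monotonicity argument for the first inequality in~\eqref{eq:gain_bound} are exactly what the paper does (the paper writes $g$ for your $C$), so that part is fine and needs no further discussion.

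Your hesitation on the second inequality is well placed, and in fact the issue is sharper than you state. The paper's proof asserts that ``Jensen's inequality applied to the concave function $g$ with $\mathbb{E}[|h|^2]=1$ yields $g(x)\ge \log_2(1+x)$,'' but Jensen applied to the concave map $t\mapsto\log_2(1+xt)$ gives the \emph{opposite} direction, $\mathbb{E}[\log_2(1+x|h|^2)]\le \log_2(1+x)$, with strict inequality for non-degenerate Rayleigh $|h|^2$. So the second line of~\eqref{eq:gain_bound} is not merely unproven by the paper's argument; it is false as stated (at high SNR the gap is the familiar $\gamma/\ln 2\approx 0.83$ bit penalty of Rayleigh fading). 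Your log-gain convexity route is correct in spirit but does not rescue the stated constant either: convexity of $u\mapsto\log_2(1+\rho_{\mathrm{eff}}e^{u})$ together with $\mathbb{E}[\ln|h|^2]=-\gamma$ only yields $C_1\ge\log_2\!\big(1+\rho_{\mathrm{eff}}(\alpha_1)e^{-\gamma}\big)$, which is strictly weaker than $\log_2\!\big(1+\rho_{\mathrm{eff}}(\alpha_1)\big)$. The honest conclusion is that the first inequality in~\eqref{eq:gain_bound} and the strict-positivity claim stand (your argument for those is complete), while the deterministic lower bound in the second line holds only with the $e^{-\gamma}$ correction or with the inequality reversed.
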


\begin{proof}
Equation~\eqref{eq:Ralpha} represents the classical training-based lower bound with a pre-log factor $(1-\alpha)$ accounting for pilot overhead and an effective SNR $\rho_{\mathrm{eff}}$ that treats LMMSE estimation error as additional Gaussian noise. The error variance under $\tau_p$ pilots is $\sigma_e^2=(1+\rho\,\tau_p)^{-1}=(1+\rho\,\alpha T_c)^{-1}$, yielding~\eqref{eq:rho_eff}. Define $g(x)\!\triangleq\!\mathbb{E}\!\left[\log_2(1+x|h|^2)\right]$, which is nondecreasing and concave in $x$. Using~\eqref{eq:Ralpha},
\begin{multline*}
R(\alpha_1)-R(\alpha_0)=(1-\alpha_1)g\!\big(\rho_{\mathrm{eff}}(\alpha_1)\big)-(1-\alpha_0)g\!\big(\rho_{\mathrm{eff}}(\alpha_0)\big) \\
=(\alpha_0-\alpha_1)\,g\!\big(\rho_{\mathrm{eff}}(\alpha_1)\big)
-(1-\alpha_0)\Big(g\!\big(\rho_{\mathrm{eff}}(\alpha_0)\big)-g\!\big(\rho_{\mathrm{eff}}(\alpha_1)\big)\Big).
\end{multline*}
Under $\rho_{\mathrm{eff}}(\alpha_1)\!\ge\!\rho_{\mathrm{eff}}(\alpha_0)$, the bracketed term is nonpositive by monotonicity of $g$, which establishes the first inequality in~\eqref{eq:gain_bound}. For the second inequality, Jensen's inequality applied to the concave function $g$ with $\mathbb{E}[|h|^2]=1$ yields $g(x)\ge \log_2(1+x)$, thereby establishing the deterministic bound in~\eqref{eq:gain_bound}.
\end{proof}

\begin{figure*}[t]
 \centering
 \includegraphics[width=0.95\textwidth]{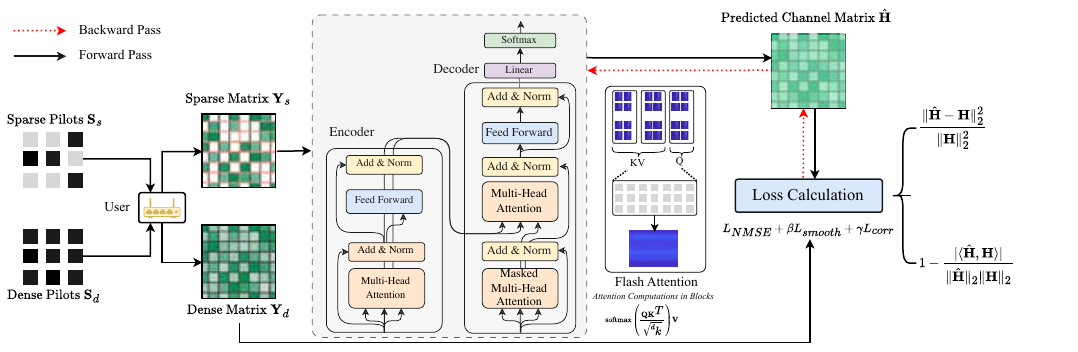}
\caption{Proposed pilot-aided Transformer framework for CSI reconstruction. Sparse pilots are first interpolated to form an initial estimate, which is then refined by the Transformer to recover high-fidelity CSI, followed by performance evaluation.}
 \label{fig:reward}
\end{figure*}

\section{Methodology}
The proposed framework integrates model-driven pilot acquisition with data-driven channel reconstruction to enable accurate CSI recovery under non-stationary propagation conditions. The framework follows a three-stage process comprising data preparation, model training, and performance evaluation. In the first stage, channel data are generated and preprocessed to provide training pairs of estimated and true CSI, denoted $\hat{\mathbf{H}}$ and $\mathbf{H}$, respectively. The second stage trains a Transformer model to map noisy pilot estimates to refined channel representations through end-to-end optimization. Finally, the trained model is evaluated using both reconstruction-based and link-level performance metrics to validate its effectiveness in improving channel estimation accuracy.

\subsection{Dataset Preparation}
The dataset used for training and evaluation consists of Monte Carlo realizations of MIMO-OFDM CSI generated under standardized 3GPP propagation conditions. Each sample contains a pair of tensors $(\hat{\mathbf{H}}, \mathbf{H})$, where $\mathbf{H} \in \mathbb{C}^{K \times L \times n_{\mathrm{Rx}} \times n_{\mathrm{Tx}}}$ represents the true channel tensor with $K$ subcarriers, $L$ OFDM symbols, $n_{\mathrm{Rx}}$ receive antennas, and $n_{\mathrm{Tx}}$ transmit antennas, while $\hat{\mathbf{H}}$ denotes the corresponding noisy and interpolated CSI obtained from sparse pilot-based estimation as described in Section~\ref{sec:system_model}.

\subsection{Model Architecture}
The proposed Flash-Attention Transformer is a compact yet expressive encoder-decoder network designed to model the joint time-frequency correlations of wireless channels. The input tensor $\hat{\mathbf{H}}$ is decomposed into its real and imaginary components to form a two-channel real-valued feature map $\mathbf{X}_0 \in \mathbb{R}^{K \times D_0}$, where $K$ denotes the number of subcarriers and $D_0$ represents the number of real-valued features per subcarrier after projection. A convolutional patch embedding layer maps each local region $\mathbf{x}_p$, representing a small time-frequency segment of the real and imaginary channel feature maps derived from $\hat{\mathbf{H}}$, into a latent vector $\mathbf{z}_p \in \mathbb{R}^{D_0}$ defined as $\mathbf{z}_p = \mathbf{W}_e \mathbf{x}_p + \mathbf{b}_e,$ where $\mathbf{W}_e$ and $\mathbf{b}_e$ are learnable projection parameters and $p$ denotes the patch index. Positional encodings $\mathbf{E}_{\mathrm{pos}}$ are then added to retain the frequency-time ordering of tokens, yielding the embedded sequence $\mathbf{Z}_0 = \{\mathbf{z}_p + \mathbf{E}_{\mathrm{pos},p}\}_{p=1}^{P}$, where $P$ denotes the total number of patches. This sequence is then processed by $N$ stacked Transformer encoder layers, each comprising a multi-head self-attention module and a feedforward subnetwork with residual connections and layer normalization. Within each layer, the attention mechanism adaptively models dependencies among channel tokens, generating refined representations $\mathbf{Z}_\ell$ that capture both short- and long-range variations across subcarriers and OFDM symbols, allowing the network to learn dynamic channel behavior and cross-subcarrier coupling without relying on fixed interpolation rules. Finally, a transposed convolutional decoder $\mathcal{F}_{\mathrm{dec}}(\cdot)$ upsamples $\mathbf{Z}_N$ to yield the reconstructed CSI estimate $\hat{\mathbf{H}} = \mathcal{F}_{\mathrm{dec}}(\mathbf{Z}_N)$. The model realizes a nonlinear mapping $\hat{\mathbf{H}} = f_{\Theta}(\hat{\mathbf{H}})$ trained end-to-end to minimize the composite loss in Section~\ref{sec:loss_formulation}.

\subsection{Channel Estimation via Spatio-Temporal Interpolation}
In contrast to traditional pilot-aided estimation applying explicit interpolation or filtering rules, the proposed approach enables the Transformer to \textit{implicitly} learn spatiotemporal relationships from data. Instead of manually defining low-pass, Wiener, or Kronecker filters, the model learns a nonlinear mapping $\hat{\mathbf{H}} = f_{\Theta}(\hat{\mathbf{H}})$ jointly capturing correlations across subcarriers, OFDM symbols, and antenna elements. The attention mechanism adaptively weighs dependencies to infer missing pilot information while accounting for non-stationary Doppler and angular variations. This unified formulation performs denoising and reconstruction in a single end-to-end step, recovering physically consistent CSI under dynamic, frequency-selective, and spatially non-stationary propagation conditions.

\subsection{Loss Formulation}
\label{sec:loss_formulation}
To ensure physically consistent and scale-robust channel reconstruction, the Transformer model is trained using a composite loss function that integrates normalized reconstruction fidelity, complex-phase alignment, and time–frequency smoothness regularization. The overall objective function is expressed as
\begin{equation}
    \mathcal{L}_{\mathrm{total}} 
    = \mathcal{L}_{\mathrm{pri}} 
    + \beta \mathcal{L}_{\mathrm{smooth}}
    + \gamma \mathcal{L}_{\mathrm{corr}},
    \label{eq:total_loss}
\end{equation}
where $\mathcal{L}_{\mathrm{pri}}$ denotes the primary reconstruction loss, $\mathcal{L}_{\mathrm{smooth}}$ enforces physical continuity across time and frequency, and $\mathcal{L}_{\mathrm{corr}}$ promotes phase-aligned correlation between the estimated and true channel responses. The weights $\beta$ and $\gamma$ are empirically selected to balance numerical accuracy and physical regularity.

\paragraph*{1) Primary Reconstruction Loss}
The core metric guiding training is the normalized mean square error (NMSE), defined as
\begin{equation}
    \mathcal{L}_{\mathrm{NMSE}} 
    = 
    \frac{\| \hat{\mathbf{H}} - \mathbf{H} \|_2^2}
         {\| \mathbf{H} \|_2^2},
    \label{eq:nmse_loss}
\end{equation}
where $\hat{\mathbf{H}}$ and $\mathbf{H}$ represent the predicted and true channel tensors, respectively. NMSE normalizes the squared reconstruction error by the instantaneous channel power, enabling fair optimization across varying SNR regimes. For channels with residual phase or scaling offsets, a phase-invariant variant is employed:
\begin{equation}
    \mathcal{L}_{\mathrm{SP\text{-}NMSE}}
    =
    \frac{\| \hat{\mathbf{H}} - \alpha^{*}\mathbf{H} \|_2^2}
         {\| \mathbf{H} \|_2^2}, \qquad
    \alpha^{*} = \frac{\langle \hat{\mathbf{H}}, \mathbf{H} \rangle}
                       {\| \mathbf{H} \|_2^2},
    \label{eq:spnmse_loss}
\end{equation}
where $\alpha^{*}$ aligns the predicted and reference channels through a complex scaling factor, thereby compensating for global phase and amplitude ambiguities commonly introduced by pilot-based estimation.

\paragraph*{2) Correlation Alignment Loss}
To further encourage structural coherence between the predicted and ground-truth CSI tensors, a complex correlation loss is employed:
\begin{equation}
    \mathcal{L}_{\mathrm{corr}} = 
    1 - 
    \frac{|\langle \hat{\mathbf{H}}, \mathbf{H} \rangle|}
         {\| \hat{\mathbf{H}} \|_2 \, \| \mathbf{H} \|_2}.
    \label{eq:corr_loss}
\end{equation}
This term penalizes decorrelation in both magnitude and phase, driving the model toward reconstructions that preserve relative antenna and subcarrier dependencies.

\paragraph*{3) Time–Frequency Smoothness Regularization}
Given that realistic wireless channels exhibit continuous variation across adjacent OFDM symbols and subcarriers, a smoothness regularization term is imposed:
\begin{equation}
    \mathcal{L}_{\mathrm{smooth}} 
    = 
    \lambda_t \| \nabla_t \hat{\mathbf{H}} \|_2^2 
    + 
    \lambda_f \| \nabla_f \hat{\mathbf{H}} \|_2^2,
    \label{eq:smooth_loss}
\end{equation}
where $\nabla_t$ and $\nabla_f$ denote discrete gradient operators along time and frequency dimensions, respectively. This term stabilizes learning and encourages physically plausible CSI surfaces by penalizing abrupt fluctuations caused by noisy pilot observations.

The composite formulation in~\eqref{eq:total_loss} jointly minimizes reconstruction distortion, misalignment, and local irregularity. Empirically, the phase-invariant NMSE dominates optimization, while the correlation and smoothness terms act as regularization priors to guide the Transformer's attention dynamics. This combination yields channel estimates that are both numerically accurate and physically consistent.

\begin{figure}[t]
  \centering
  \begin{subfigure}[t]{0.37\textwidth}
    \centering
    \includegraphics[width=\linewidth]{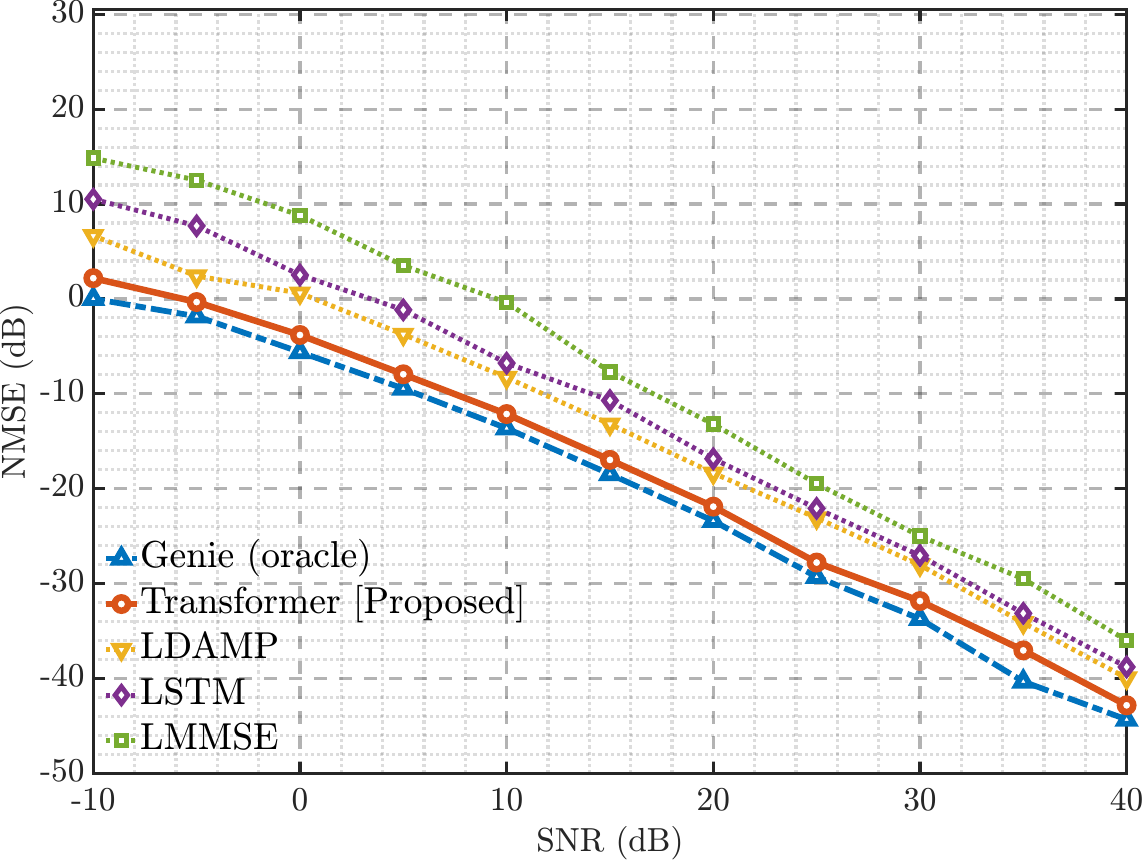}
    \caption{NMSE of the proposed model vs.\ LMMSE, LDAMP, LSTM, and oracle across SNR.}
    \label{fig:nmse_results}
  \end{subfigure}

  \vspace{1em}
  \begin{subfigure}[t]{0.37\textwidth}
    \centering
    \includegraphics[width=\linewidth]{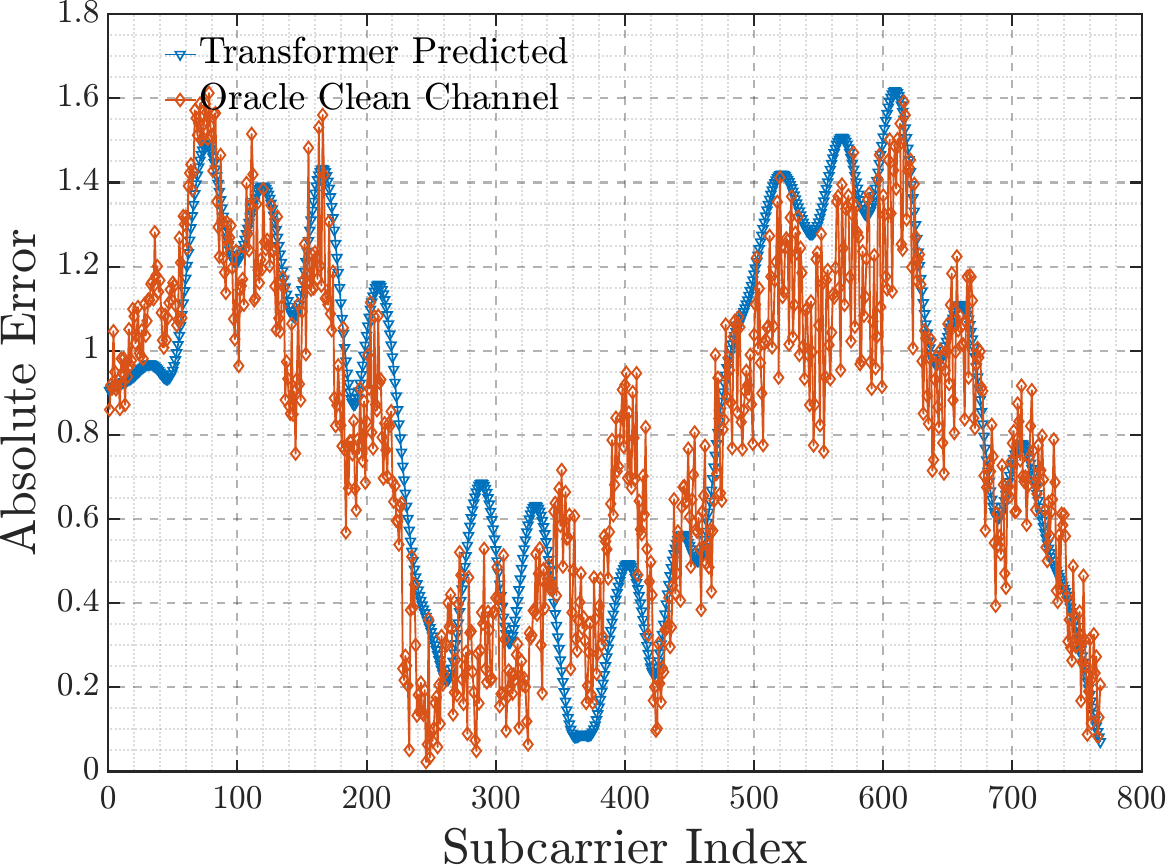}
    \caption{Absolute reconstruction error across subcarriers compared to the clean channel.}
    \label{fig:subcarrier_error}
  \end{subfigure}
  \caption{NMSE performance and subcarrier-wise accuracy of the Transformer estimator versus baselines across SNRs.}
  \label{fig:perf_two_panel}
\end{figure}

\begin{figure*}[t!]
     \centering
    \begin{subfigure}[t]{0.31\textwidth}
         \centering
         \includegraphics[width=\textwidth]{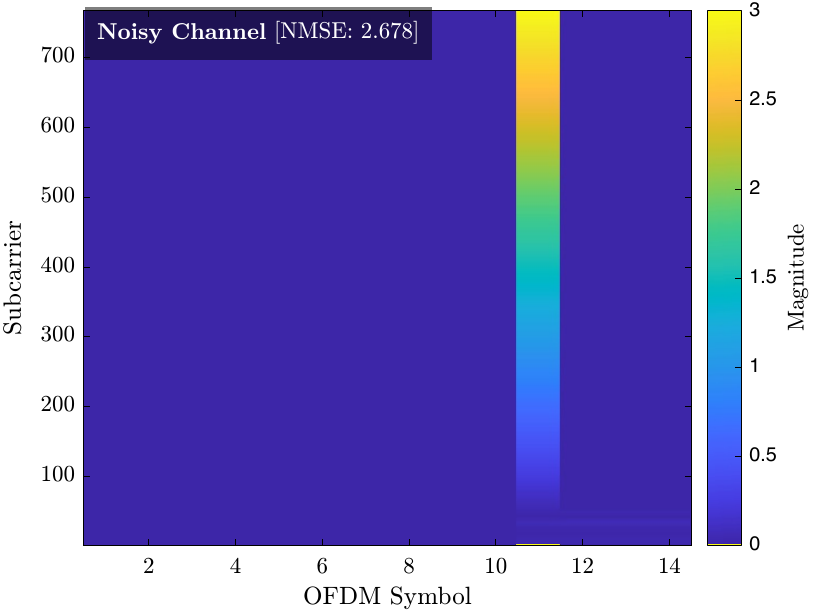}
         \caption{Noisy CSI obtained from sparse pilots after temporal hold and linear interpolation.}
         \label{fig:heatmap_noisy}
     \end{subfigure}
     \hspace{1em}
     \begin{subfigure}[t]{0.31\textwidth}
         \centering
         \includegraphics[width=\textwidth]{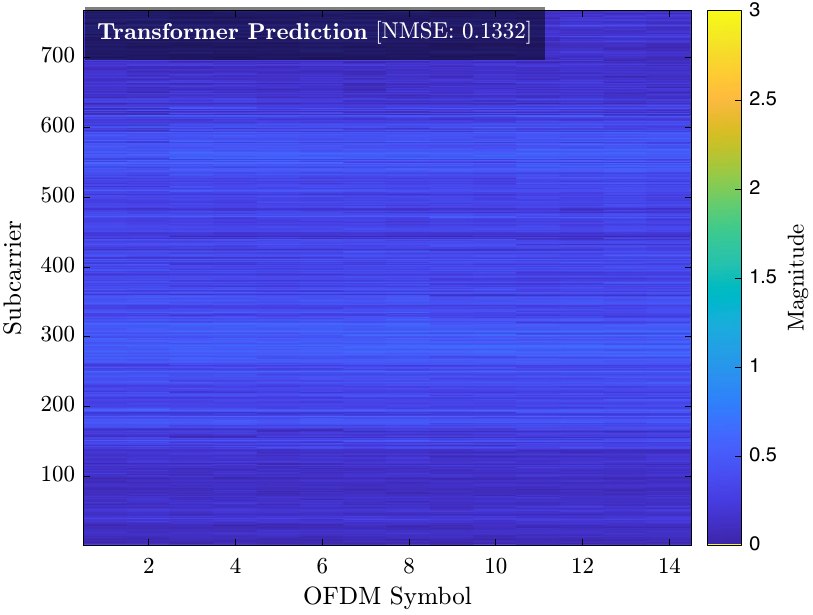}
         \caption{Reconstructed CSI from the proposed Transformer model.}
         \label{fig:heatmap_predicted}
     \end{subfigure}
     \hspace{1em}
     \begin{subfigure}[t]{0.31\textwidth}
         \centering
         \includegraphics[width=\textwidth]{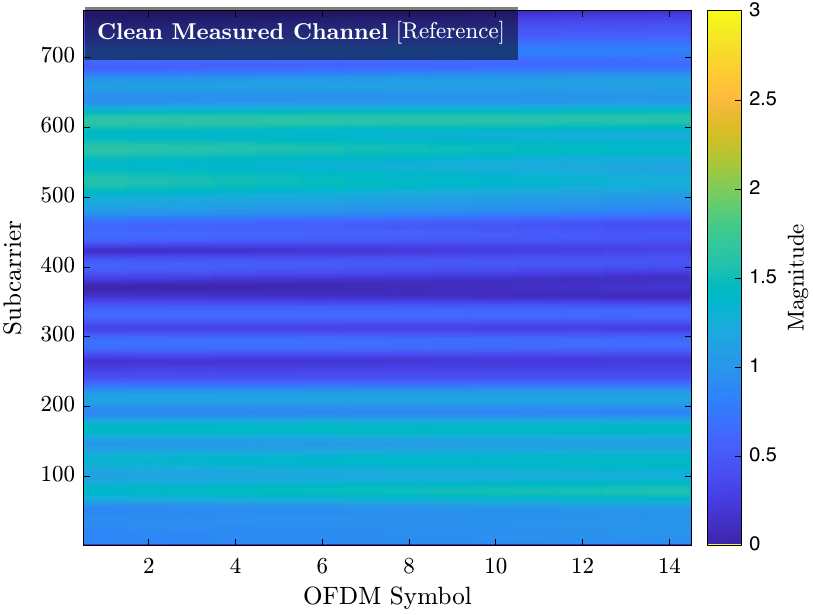}
         \caption{Ground-truth CSI obtained from dense sounding (oracle reference).}
         \label{fig:heatmap_truth}
     \end{subfigure}
      \caption{Time–frequency magnitude heatmaps of CSI: (a) noisy and interpolated sparse-pilot estimate, (b) Transformer-reconstructed CSI, and (c) oracle dense-pilot reference.}
     \label{fig:heatmaps}
\end{figure*}

\subsection{Evaluation Flow}
Model evaluation is performed in two complementary stages to quantify both reconstruction fidelity and communication performance. In the first stage, the trained Transformer is applied to unseen test data to reconstruct the refined estimate from the noisy pilot-aided inputs $\hat{\mathbf{H}}$. The NMSE in decibels is computed across all test realizations as
\begin{equation}
    \mathrm{NMSE}_{\mathrm{dB}} = 10 \log_{10} 
    \left( 
        \frac{\mathbb{E}[\|\hat{\mathbf{H}} - \mathbf{H}\|_2^2]}
             {\mathbb{E}[\|\mathbf{H}\|_2^2]}
    \right),
    \label{eq:nmse_eval}
\end{equation}
to measure estimation accuracy relative to the true channel. This metric reflects the model's ability to generalize across unseen fading conditions and quantifies reconstruction quality independently of absolute channel gain.

\section{Simulation Results}
\label{sec:results}

\subsection{Baselines and Metrics}
The proposed pilot-aided framework is compared against four widely used channel estimation methods. The first baseline is LMMSE with linear interpolation, where the received pilot vector is modeled as $\mathbf{y} = \mathbf{S}\mathbf{h} + \mathbf{w}$ with $\mathbf{w} \sim \mathcal{CN}(\mathbf{0}, \sigma^2\mathbf{I})$. The LMMSE estimate is obtained as $\hat{\mathbf{h}}_{\Omega} = \mathbf{R}_{hy}\mathbf{R}_{yy}^{-1}\mathbf{y}$, where $\mathbf{R}_{yy} = \mathbf{S}\mathbf{R}_{hh}\mathbf{S}^{H} + \sigma^2\mathbf{I}$. This is followed by temporal hold (zero-order hold across non-piloted OFDM symbols) and per-symbol linear frequency interpolation over non-pilot resource elements~\cite{gong2024channel}. The second baseline, LDAMP~\cite{8353153}, refines the time–frequency domain CSI iteratively by treating interpolation as a linear sensing model $\mathbf{A}$. The iterative updates are given by
\begin{align}
\mathbf{z}^{t} &= \mathbf{y} - \mathbf{A}\mathbf{x}^{t} + \frac{1}{\delta}\mathbf{z}^{t-1}\langle \mathcal{D}'_{\theta_t}(\mathbf{x}^{t} + \mathbf{A}^{H}\mathbf{z}^{t}) \rangle, \\
\mathbf{x}^{t+1} &= \mathcal{D}_{\theta_t}(\mathbf{x}^{t} + \mathbf{A}^{H}\mathbf{z}^{t}),
\end{align}
where $\delta$ denotes the state-evolution factor and $\mathcal{D}_{\theta_t}$ represents a learned denoising operator. The third baseline employs a complex-valued two-layer LSTM that processes the frequency-domain sequence of each OFDM symbol using
\begin{align}
\mathbf{h}_t, \mathbf{c}_t &= \mathrm{LSTM}(\mathbf{x}_t, \mathbf{h}_{t-1}, \mathbf{c}_{t-1}), \\
\hat{\mathbf{y}}_t &= \mathbf{W}_o \mathbf{h}_t + \mathbf{b}_o,
\end{align}
effectively capturing short-term temporal correlations~\cite{8353153}. Finally, a genie-aided oracle with comb-2 dense sounding and perfect timing provides an effectively noise-free channel $\mathbf{H}$, serving as the upper bound for comparison, i.e., $\hat{\mathbf{H}}_{\mathrm{genie}} = \mathbf{H}$. All baselines are evaluated under the 3GPP NR configuration with simulation parameters summarized in Table~\ref{tab:params}.

\begin{table}[t!]
\caption{Simulation parameters for 3GPP NR MIMO-OFDM.}
\label{tab:params}
\centering
\small
\begin{tabular}{l l}
\toprule
\textbf{Parameter} & \textbf{Value} \\
\midrule
Carrier grid & $N_{\mathrm{RB}}{=}64$ ($K{=}768$ subcarriers), $\Delta f{=}15\,\mathrm{kHz}$ \\
MIMO & $n_{\mathrm{Tx}}{=}2$, $n_{\mathrm{Rx}}{=}4$, $n_\ell{=}2$ \\
Channel & TDL-C, $\tau_{\mathrm{rms}}{=}251\,\mathrm{ns}$, $f_{\mathrm{D}}^{\max}{=}50\,\mathrm{Hz}$ \\
SNR (nominal) & $15\,\mathrm{dB}$ prior to large-scale adjustment \\
Dense SRS & $4$ symbols at $\ell_0{=}10$, comb-2, every slot \\
Sparse SRS & $1$ symbol at $\ell_0{=}10$, comb-4, every $2$ slots \\
Noise & AWGN per RE, $\mathcal{CN}(0,\sigma^2)$ \\
\bottomrule
\end{tabular}
\end{table}

\subsection{Results}
The proposed framework demonstrates consistently superior reconstruction accuracy across all SNR regimes, closely approaching the oracle upper bound. The NMSE curves in Fig.~\ref{fig:nmse_results} reveal the most pronounced gains at moderate SNR levels, where sparse pilots and interpolation artifacts most severely degrade conventional estimators. This demonstrates the ability to capture long-range time–frequency dependencies that LMMSE oversmoothes and that LSTM or LDAMP fail to model effectively. Fig.~\ref{fig:subcarrier_error} confirms that the proposed framework preserves fine spectral details and minimizes distortion near deep fades.

Time–frequency heatmaps in Fig.~\ref{fig:heatmaps} provide a qualitative comparison: the sparse-pilot input exhibits discontinuities and block-like artifacts, whereas the Transformer-based reconstruction restores structural smoothness, faithfully recovering the underlying CSI surface. Quantitatively, the phase-invariant NMSE decreases from $2.67$ to $0.13$ (from $4.26$\,dB to $-8.86$\,dB), representing approximately $13$\,dB improvement. Using this error level within the effective SNR formulation of Section~\ref{sec:pilot_overhead_rate} indicates that the reconstructed CSI yields link performance nearly identical to the dense-pilot case, confirming high CSI fidelity with drastically reduced pilot overhead.

With dense and sparse pilot fractions of $\alpha_0 = \tfrac{2}{14}$ and $\alpha_1 = \tfrac{1}{112}$, respectively, the reliable reconstruction satisfies the rate-gain condition of~\eqref{eq:gain_bound}, implying that pilot reduction directly translates into higher achievable throughput without sacrificing estimation quality. Among baselines, LMMSE deteriorates under sparse sampling and non-stationary fading, LDAMP fails to track rapidly varying time–frequency correlations, and LSTM lacks global temporal-spectral context. In contrast, The proposed architecture consistently outperforms all baselines by leveraging self-attention to capture long-range dependencies and employing physics-aware regularization to enforce temporal and spectral consistency, enabling accurate, stable, and low-overhead CSI recovery under highly non-stationary and pilot-sparse conditions.

\section{Conclusion}

This paper introduces a pilot-aided Transformer framework that redefines CSI recovery under sparse training, moving beyond conventional local interpolation or model-unfolded estimation toward a global, context-aware reconstruction paradigm. By treating pilots as high-value structural anchors rather than dense sampling points, the proposed approach leverages self-attention to infer long-range time–frequency dependencies that are inaccessible to classical estimators. Under a standardized 3GPP NR setup, the proposed model outperforms LMMSE, LSTM, and LDAMP baselines, achieving a $\sim$13\,dB gain in phase-invariant normalized mean-square error (NMSE) and significantly lower bit-error rate (BER) while using $16\times$ fewer pilots. The integration of phase-invariant error minimization, correlation alignment, and time–frequency smoothness regularization ensures physical consistency while maintaining the expressive capacity of attention-based learning. The proposed algorithm enables reliable CSI recovery and higher spectral efficiency without sacrificing link quality, addressing a key bottleneck in adaptive, low-overhead channel estimation for non-stationary 5G/B5G networks.

\footnotesize
\bibliographystyle{IEEEtran}
\bibliography{main}
\end{document}